\newtheorem{theorem}{Theorem}[section]
\newtheorem{lemma}[theorem]{Lemma}
\newtheorem{proposition}[theorem]{Proposition}
\newtheorem{definition}[theorem]{Definition}
\def \endprf{\hfill {\vrule height6pt width6pt depth0pt}\medskip}
\newenvironment{proof}{\noindent {\bf Proof} }{\endprf\par}
\numberwithin{equation}{section}
\newcommand{\R}{\mathbb{R}}
\newcommand{\sgn}{\textrm{sgn}}
\renewcommand{\P}{\operatorname{\mathbb{P}}}
\newcommand{\E}{\operatorname{\mathbb{E}}}
\newcommand{\cA}{\mathcal{A}}
\newcommand{\PT}{{\cal P}_T}
\newcommand{\PTc}{{\cal P}_{T^\perp}}
\newcommand{\vct}[1]{\bm{#1}}
\newcommand{\mtx}[1]{\bm{#1}}
\newcommand{\trace}{\operatorname{Tr}}
\newcommand{\minimize}{\mbox{minimize}}
\newcommand{\st}{\mbox{subject to}}
\numberwithin{equation}{section}
\newcommand{\eq}[1]{(\ref{eq:#1})}
\begin{document}

\title{Simple Bounds for Recovering Low-complexity Models}

\author{Emmanuel Cand\`{e}s\footnote{Mathematics and Statistics Departments, Stanford University, Stanford, CA 94305.  \texttt{candes@stanford.edu}} and Benjamin Recht\footnote{Computer Sciences Department, University of Wisconsin-Madison, Madison, WI, 53706.
\texttt{brecht@cs.wisc.edu}}}

\date{June 2011; revised February, 2012}

\maketitle

\vspace{-0.3in}

\begin{abstract}
This note presents a unified analysis of the recovery of simple objects from random linear measurements.  When the linear  functionals are Gaussian, we show that an   $s$-sparse vector in $\R^n$ can be efficiently recovered from $2s \log n$ measurements with high probability and a rank $r$, $n\times n$ matrix can be efficiently recovered from $r(6n-5r)$ measurements with high probability. For sparse vectors, this is within an additive factor  of the best known nonasymptotic bounds.  For low-rank matrices, this matches the best known bounds.  We present a parallel analysis for block-sparse vectors obtaining similarly tight bounds.   In the case of sparse and block-sparse signals, we additionally demonstrate that our bounds are only slightly weakened when the measurement map is a random sign matrix.  Our results are based on analyzing a  particular dual point which certifies optimality conditions of the respective convex programming problem. Our calculations rely only on standard large deviation inequalities and our analysis is self-contained.  
\end{abstract}

{\bf Keywords.} $\ell_1$-norm minimization \and nuclear-norm minimization \and block-sparsity \and duality \and random matrices.

\section{Introduction}

The past decade has witnessed a revolution in convex optimization algorithms for
recovering structured models from highly incomplete information.  Work
in compressed sensing has shown that when a vector is \emph{sparse},
then it can be reconstructed from a number of nonadaptive linear
measurements proportional to a logarithmic factor times the signal's
sparsity level~\cite{CRT06,DonohoCS}. Building on this work, many have
recently demonstrated that if an array of user data has low-rank, then
the matrix can be re-assembled from a sampling of information
proportional to the number of parameters required to specify a
low-rank factorization. See
\cite{Recht10,CandesRecht09,CandesPlanTight} for some early references
on this topic.

Sometimes, one would like to know precisely how many measurements are
needed to recover an $s$-sparse vector (a vector with at most $s$
nonzero entries) by $\ell_1$ minimization or a rank-$r$ matrix by
nuclear-norm minimization. This of course depends on the kind of
measurements one is allowed to take, and can be empirically determined
or approximated by means of numerical studies. At the theoretical
level, however, very precise answers---e.g., perfect knowledge of
numerical constants---for models of general interest may be very hard
to obtain.  For instance, in~\cite{CRT06}, the authors demonstrated
that about $20 s \log n$ randomly selected Fourier coefficients were
sufficient to recover an $s$-sparse signal, but determining the
minimum number that would suffice appears to be a very difficult
question. Likewise, obtaining precise theoretical knowledge about the
number of randomly selected entries required to recover a rank-$r$
matrix by convex programming seems delicate, to say the least. For
some special and idealized models, however, this is far easier and the
purpose of this note is to make this clear.

In this note, we demonstrate that many bounds concerning Gaussian measurements can be derived via elementary, direct methods using Lagrangian duality.  By a careful analysis of a particular Lagrange multiplier, we are able to prove that $2s \log n$ measurements are sufficient to recover an $s$-sparse vector in $\R^n$ and $r(6n-5r)$ measurements are sufficient to recover
a rank $r$, $n\times n$ matrix with high probability.  These almost match the best-known, non-asymptotic bounds for sparse vector reconstruction ($2s \log(n/s)+5/4s$ measurements~\cite{DonohoTanner09,CRPW10}), and match the best known bounds for low-rank matrix recovery in the nuclear norm (as reported in~\cite{CRPW10,Oymak10}).  

The work \cite{CRPW10}, cited above, presents a unified view of the convex programming approach to inverse problems and provides a relatively simple framework to derive exact, robust recovery bounds for a variety of simple models. As we already mentioned, the authors
also provide rather tight bounds on sparse vector and low-rank matrix
recovery in the \emph{Gaussian measurement ensemble} by using a deep
theorem in functional analysis due to Gordon, which concerns the
intersection of random subspaces with subsets of the
sphere~\cite{Gordon88b}.  Gordon's Theorem has also been used to provide sharp estimates of the \emph{phase transitions} for the $\ell_1$ and nuclear norm heuristics in~\cite{Stojnic09} and~\cite{Oymak10} respectively. Our work complements these results, demonstrating that the dual multiplier ansatz proposed in~\cite{Fuchs04} can also yield very tight bounds for many signal recovery problems.

To introduce our results, suppose we are given information about an
object $\vct{x}_0 \in \R^n$ of the form $\Phi \vct{x}_0
\in \R^m$ where $\Phi$ is an $m \times n$ matrix. When $\Phi$
has entries i.i.d.~sampled from a Gaussian distribution with mean $0$
and variance $1/m$, we call it a \emph{Gaussian measurement map}. We
want bounds on the number of rows $m$ of $\Phi$ to ensure that
$\vct{x}_0$ is the unique minimizer of the problem
\begin{equation}\label{eq:atomic-norm-primal}
	\begin{array}{ll} \minimize & \|\vct{x}\|_{\cA}\\
	\st & \Phi \vct{x} = \Phi \vct{x}_0. 
	\end{array}
\end{equation}
Here $\|\cdot\|_{\cA}$ is a norm with some suitable properties which
encourage solutions which conform to some notion of simplicity. Our
first result is the following 
\begin{theorem}
\label{teo:l1}
  Let $\vct{x}_0$ be an arbitrary $s$-sparse vector and
  $\|\cdot\|_{\cA}$ be the $\ell_1$ norm.  Let $\beta>1$.
  \begin{itemize}
  \item For Gaussian measurement maps $\Phi$ with $m \geq 2 \beta s
    \log n + s$, the recovery is exact with
    probability at least $1 - 2n^{-f(\beta,s)}$ where 
\[
f(\beta,s) = \Biggl[ \sqrt{\frac{\beta}{2s}+\beta-1
  }-\sqrt{\frac{\beta}{2s}}\Biggr]^2\,.
\]    
  \item Let  $\epsilon\in (0,1)$.  For binary measurement maps $\Phi$ with i.i.d.~entries taking on values $\pm m^{-1/2}$ with equal probability, there exist numerical constants $c_0$ and $c_1$ such that if $n\geq \exp(c_0/\epsilon^2)$ and $m \geq 2 \beta(1-\epsilon)^{-2} s\log n + s$, the recovery is exact with probability at least  $1-n^{1-\beta} - n^{-c_1 \beta \epsilon^2}$.
  \end{itemize}
\end{theorem}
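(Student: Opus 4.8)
\noindent The plan is to run the dual-certificate (Fuchs multiplier) argument advertised in the introduction, so that only elementary large-deviation inputs are needed. Write $S=\{i:x_{0,i}\neq0\}$, $|S|=s$, let $\Phi_S$ denote the $m\times s$ submatrix of columns of $\Phi$ indexed by $S$, and let $\vct{z}\in\{\pm1\}^s$ be the restriction of $\sgn(\vct{x}_0)$ to $S$. First I would record the standard optimality reduction: $\vct{x}_0$ is the unique minimizer of \eqref{eq:atomic-norm-primal} with $\|\cdot\|_{\cA}=\|\cdot\|_1$ as soon as the columns of $\Phi$ indexed by $S$ are linearly independent and there is a $\vct{y}=\Phi^*\vct{\lambda}$ in the row space of $\Phi$ with $y_i=z_i$ on $S$ and $|y_i|<1$ off $S$; this is immediate from linear programming duality and strict complementarity. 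I would then take the Fuchs candidate $\vct{\lambda}=\Phi_S(\Phi_S^*\Phi_S)^{-1}\vct{z}$, for which $(\Phi^*\vct{\lambda})_i=z_i$ on $S$ automatically, so that the problem reduces to verifying
\[
\max_{j\notin S}\ \bigl|\langle \Phi\vct{e}_j,\ \vct{\lambda}\rangle\bigr|<1
\]
(together with the almost-sure, whenever $m\ge s$, injectivity of $\Phi_S$). The structural point that drives everything is that $\vct{\lambda}$ is a function of the columns of $\Phi$ in $S$ alone, hence is independent of each column $\Phi\vct{e}_j$, $j\notin S$.

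For a Gaussian map, condition on $\Phi_S$: the variables $\{\langle\Phi\vct{e}_j,\vct{\lambda}\rangle\}_{j\notin S}$ are then i.i.d.\ $N(0,\sigma^2)$ with $\sigma^2=\|\vct{\lambda}\|^2/m=\tfrac1m\vct{z}^*(\Phi_S^*\Phi_S)^{-1}\vct{z}$, and I would compute the law of $\sigma^2$ exactly. Writing $\Phi_S=m^{-1/2}\mtx{G}$ with $\mtx{G}$ an $m\times s$ matrix of i.i.d.\ $N(0,1)$'s, $\sigma^2=\vct{z}^*(\mtx{G}^*\mtx{G})^{-1}\vct{z}$; orthogonal invariance of the law of $\mtx{G}$ lets one replace $\vct{z}$ by $\sqrt{s}\,\vct{e}_1$, so $\sigma^2\overset{d}{=}s\,[(\mtx{G}^*\mtx{G})^{-1}]_{11}$, and the Schur-complement formula identifies $1/[(\mtx{G}^*\mtx{G})^{-1}]_{11}$ with the squared norm of the first column of $\mtx{G}$ after projecting off the span of the other $s-1$ columns, i.e.\ with a $\chi^2_{m-s+1}$ variable; hence $\sigma^2\overset{d}{=}s/\chi^2_{m-s+1}$. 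Now a union bound over the at most $n$ off-support indices, the sharp Gaussian tail $\P(|N(0,\sigma^2)|\ge1)\le e^{-1/(2\sigma^2)}$, and the standard $\chi^2$ bound $\P(\chi^2_k\le(1-\delta)k)\le e^{-k\delta^2/4}$ give, for every $\delta\in(0,1)$,
\[
\P\Bigl(\max_{j\notin S}|\langle\Phi\vct{e}_j,\vct{\lambda}\rangle|\ge1\Bigr)\ \le\ e^{-(m-s+1)\delta^2/4}\ +\ n\,\exp\!\bigl(-(1-\delta)(m-s+1)/(2s)\bigr).
\]
Since $m\ge2\beta s\log n+s$ gives $m-s+1\ge2\beta s\log n$, the choice $\delta=\sqrt{2f/(\beta s)}$ makes the first term at most $n^{-f}$ provided $(m-s+1)\delta^2/4\ge f\log n$, and the second at most $n^{-f}$ provided $1-\delta\ge(f+1)/\beta$; equating the two prescriptions $\delta=\sqrt{2f/(\beta s)}$ and $\delta=1-(f+1)/\beta$ is exactly the relation $f+2\sqrt{(\beta/(2s))\,f}=\beta-1$, whose positive solution is $f=\bigl(\sqrt{\beta/(2s)+\beta-1}-\sqrt{\beta/(2s)}\bigr)^2=f(\beta,s)$, and the failure probability is then at most $2n^{-f(\beta,s)}$.

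For a $\pm m^{-1/2}$ sign map the construction is identical and $(\Phi^*\vct{\lambda})_S=\vct{z}$ still holds exactly. Conditioning on $\Phi_S$, each $\langle\Phi\vct{e}_j,\vct{\lambda}\rangle$ is a Rademacher sum $m^{-1/2}\sum_k\eta_k\lambda_k$, so Hoeffding's inequality yields the same tail $\P(|\langle\Phi\vct{e}_j,\vct{\lambda}\rangle|\ge1\mid\vct{\lambda})\le2\exp(-m/(2\|\vct{\lambda}\|^2))$; what changes is the law of $\sigma^2=\tfrac1m\vct{z}^*(\Phi_S^*\Phi_S)^{-1}\vct{z}\le s/(m\,\lambda_{\min}(\Phi_S^*\Phi_S))$. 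Writing $\Phi_S^*\Phi_S=\Id_s+E$, with $E$ a mean-zero matrix of normalized sign sums, a standard matrix-concentration (or $\epsilon$-net) estimate gives $\lambda_{\min}(\Phi_S^*\Phi_S)\ge1-\epsilon$ outside an event of probability at most $\exp(-c\,m\epsilon^2/s)$; the hypotheses $n\ge\exp(c_0/\epsilon^2)$ and $m\ge2\beta(1-\epsilon)^{-2}s\log n+s$ make $m\epsilon^2/s$ large enough (in absolute terms) for this estimate to be effective and force $\exp(-c\,m\epsilon^2/s)\le n^{-c_1\beta\epsilon^2}$. On the complementary event $\sigma^2\le s/(m(1-\epsilon))$, so $1/(2\sigma^2)\ge m(1-\epsilon)/(2s)\ge\beta(1-\epsilon)^{-1}\log n$ by the choice of $m$, and a union bound over the off-support coordinates then bounds the remaining failure probability by $n^{1-\beta}$ (the factor $2$ from Hoeffding being absorbed into the slack $\beta(1-\epsilon)^{-1}\log n-\beta\log n$, using $n\ge\exp(c_0/\epsilon^2)$); altogether the failure probability is at most $n^{1-\beta}+n^{-c_1\beta\epsilon^2}$.

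The routine ingredients are the optimality reduction, the Fuchs ansatz, Hoeffding, and the union bounds. The step that really needs care—and the one that fixes the precise constant $f(\beta,s)$—is the \emph{exact} identity $\sigma^2\overset{d}{=}s/\chi^2_{m-s+1}$: controlling $\sigma^2$ more crudely, e.g.\ through $\sigma_{\min}(\Phi_S)$, costs an extra additive term in the exponent and fails to reproduce this $f$. For the sign-matrix case the technical heart is instead the quantitative estimate $\lambda_{\min}(\Phi_S^*\Phi_S)\ge1-\epsilon$ for a random sign matrix, together with the matching of its failure probability to the dimension regime $n\ge\exp(c_0/\epsilon^2)$; with that in hand the rest mirrors the Gaussian argument.
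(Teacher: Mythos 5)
Your proposal is correct and follows essentially the same route as the paper: the Fuchs dual certificate $\Phi_S(\Phi_S^*\Phi_S)^{-1}\sgn(\vct{x}_{0,S})$, the exact identification of $\|\vct{\lambda}\|_2^2$ with $s$ times an inverse chi-squared variable with $m-s+1$ degrees of freedom (via rotation invariance and the Schur complement), Gaussian/Hoeffding tails with a union bound over the off-support coordinates, and balancing the two exponential terms, which yields exactly $f(\beta,s)$. The only immaterial difference is in the sign-matrix case, where you control $\lambda_{\min}(\Phi_S^*\Phi_S)\ge 1-\epsilon$ by an $\epsilon$-net/matrix-concentration estimate, whereas the paper invokes an off-the-shelf deviation inequality for the smallest singular value of a subgaussian $m\times s$ matrix; both play the identical role and use the hypothesis $n\ge\exp(c_0/\epsilon^2)$ in the same way to absorb the lower-order terms and constants.
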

The algebraic expression $f(\beta,s)$ is positive for all $\beta>1$
and $s>0$.  For all fixed $\beta>1$, $f(\beta,s)$ is an increasing
function of $s$ so that $\min_{s\geq 1} f(\beta,s) = f(\beta,1)$.
Moreover, observe that $\lim_{s \rightarrow \infty} f(\beta,s) =
\beta-1$. For binary measurement maps, our result states that for any $\delta>0$, $(2+\delta)s\log n$ entries suffice to recover an $s$-sparse signal when $n$ is sufficiently large. We also provide a very similar result for \emph{block-sparse} signals, stated in Section~\ref{sec:block-sparse}.

Our third result concerns the recovery of a low-rank matrix.
\begin{theorem}
\label{teo:nuclear}
Let $\mtx{X}_0$ be an arbitrary $n_1 \times n_2$ rank-$r$-matrix and
$\|\cdot\|_{\cA}$ be the matrix nuclear norm. For a Gaussian
measurement map $\Phi$ with $m \geq \beta r(3n_1 + 3n_2 - 5r)$ for
some $\beta>1$, the recovery is exact with probability at least
$1-2e^{(1-\beta)n/8}$, where $n = \max(n_1,n_2)$.
\end{theorem}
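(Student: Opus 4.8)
The plan is to exhibit an explicit dual certificate for the nuclear-norm program. Fix an SVD $\mtx{X}_0 = \mtx{U}\mtx{\Sigma}\mtx{V}^\adj$, let $T = \{\mtx{U}\mtx{A}^\adj + \mtx{B}\mtx{V}^\adj : \mtx{A}\in\R^{n_2\times r},\ \mtx{B}\in\R^{n_1\times r}\}$ be the tangent space, and recall the standard sufficient condition for uniqueness: $\mtx{X}_0$ is the unique minimizer of \eq{atomic-norm-primal} provided (i) $\Phi$ is injective on $T$, and (ii) there is a matrix $\mtx{Y} = \Phi^\adj\vct{\lambda}$ in the range of $\Phi^\adj$ with $\PT\mtx{Y} = \mtx{U}\mtx{V}^\adj$ and $\norm{\PTc\mtx{Y}} < 1$, the spectral norm. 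The justification is the one-line subgradient estimate: for feasible $\mtx{X}_0+\mtx{H}$ one has $\norm{\mtx{X}_0+\mtx{H}}_* - \norm{\mtx{X}_0}_* \ge \langle\mtx{U}\mtx{V}^\adj+\mtx{W},\mtx{H}\rangle$ for any $\mtx{W}\in T^\perp$ with $\norm{\mtx{W}}\le 1$; choosing $\mtx{W}$ dual to $\PTc\mtx{H}$ and using $\langle\mtx{Y},\mtx{H}\rangle = \langle\vct{\lambda},\Phi\mtx{H}\rangle = 0$ and $\PT\mtx{Y} = \mtx{U}\mtx{V}^\adj$, this is at least $(1-\norm{\PTc\mtx{Y}})\norm{\PTc\mtx{H}}_*$, which is positive unless $\PTc\mtx{H} = 0$, and then (i) forces $\mtx{H} = 0$. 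So it suffices to build such a $\mtx{Y}$ with the stated probability.

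By invariance of the Gaussian ensemble under $\Phi\mapsto\Phi(\mtx{V}_0\otimes\mtx{U}_0)$ for orthogonal $\mtx{U}_0,\mtx{V}_0$, and unitary invariance of the nuclear norm, I may assume $\mtx{U}$ and $\mtx{V}$ consist of the first $r$ coordinate vectors; then $T^\perp$ is exactly the space of matrices supported on the trailing $(n_1-r)\times(n_2-r)$ block, $\norm{\mtx{U}\mtx{V}^\adj}_F^2 = r$, and, splitting the $n_1 n_2$ columns of $\Phi$ (acting on vectorized matrices) into those ``in $T$'' and those ``in $T^\perp$'', we get independent Gaussian blocks $\Phi_T$ (size $m\times d$, $d := r(n_1+n_2-r) = \dim T$) and $\Phi_{T^\perp}$. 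Following the ansatz of \cite{Fuchs04}, take $\mtx{Y} = \Phi^\adj\vct{\lambda}$ with $\vct{\lambda}$ the least-$\ell_2$-norm vector solving $\PT\Phi^\adj\vct{\lambda} = \mtx{U}\mtx{V}^\adj$; then $\vct{\lambda}$ depends on $\Phi_T$ alone, injectivity of $\Phi$ on $T$ amounts to $\Phi_T$ having full column rank, and, conditionally on $\Phi_T$, the block $\PTc\mtx{Y} = \Phi_{T^\perp}^\adj\vct{\lambda}$ is an $(n_1-r)\times(n_2-r)$ matrix with i.i.d. $\mathcal{N}(0,\norm{\vct{\lambda}}^2/m)$ entries, so $\norm{\PTc\mtx{Y}} = (\norm{\vct{\lambda}}/\sqrt m)\,\norm{\mtx{G}}$ for a standard Gaussian matrix $\mtx{G}$ of that size independent of $\vct{\lambda}$.

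It remains to control the two factors. For the Gaussian matrix, $\norm{\mtx{G}}\le\sqrt{n_1-r}+\sqrt{n_2-r}+t$ off an event of probability $e^{-t^2/2}$. For $\vct{\lambda}$, rotational invariance of $\Phi_T$ gives that $\norm{\vct{\lambda}}^2 = \vct{c}^\adj(\Phi_T^\adj\Phi_T)^{-1}\vct{c}$ (with $\vct{c}$ the $T$-coordinates of $\mtx{U}\mtx{V}^\adj$, $\norm{\vct{c}}^2 = r$) is equal in law to $r\,[(\Phi_T^\adj\Phi_T)^{-1}]_{11}$, and the Schur-complement identity writes $[(\Phi_T^\adj\Phi_T)^{-1}]_{11} = 1/\norm{\mtx{P}^\perp\vct{\varphi}}^2$, where $\vct{\varphi}\sim\mathcal{N}(0,m^{-1}\Id)$ is the first column of $\Phi_T$ and $\mtx{P}^\perp$ projects off the span of the other $d-1$ columns; hence $\norm{\vct{\lambda}}^2/m$ has the law of $r/\chi^2_{m-d+1}$, and a one-sided chi-square bound gives $\norm{\vct{\lambda}}^2/m \le r/(m-d+1-2\sqrt{(m-d+1)x})$ off an event of probability $e^{-x}$ (the companion lower bound on $\sigma_{\min}(\Phi_T)$ supplies the injectivity in (i)). Combining,
\[
\norm{\PTc\mtx{Y}}^2 \;\le\; \frac{r\,(\sqrt{n_1-r}+\sqrt{n_2-r}+t)^2}{\,m-d+1-2\sqrt{(m-d+1)x}\,}\,,
\]
and since $(\sqrt{n_1-r}+\sqrt{n_2-r})^2 \le 2(n_1-r)+2(n_2-r)$ and $d = r(n_1+n_2-r)$, the clean inequality $2r(n_1+n_2-2r) < m-d$ is exactly $m > r(3n_1+3n_2-5r)$; taking $m\ge\beta r(3n_1+3n_2-5r)$ leaves a surplus $(\beta-1)r(3n_1+3n_2-5r)$ to absorb the $t$ and $\sqrt{(m-d)x}$ corrections, and calibrating the deviation parameters against this surplus makes $\norm{\PTc\mtx{Y}} < 1$ with probability at least $1-2e^{(1-\beta)n/8}$, $n=\max(n_1,n_2)$.

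The main obstacle is the sharp estimate of $\norm{\vct{\lambda}}$. The crude bound $\norm{\vct{\lambda}}^2 \le r/\sigma_{\min}(\Phi_T)^2$ replaces the denominator $m-d$ above by $(\sqrt m-\sqrt d)^2 = (m-d) - 2(\sqrt{md}-d)$, costing a cross term of the same order $rn$ as the leading terms and forcing $\beta\gtrsim 2$; it is the exact inverse-Wishart (equivalently, Schur-complement) law $\norm{\vct{\lambda}}^2 \stackrel{d}{=} rm/\chi^2_{m-d+1}$ that recovers the constant $r(3n_1+3n_2-5r)$ for every $\beta>1$. What is left is bookkeeping: choosing $t$ and $x$ against the surplus and invoking the chi-square and Gaussian singular-value tail inequalities to collect the probability $1-2e^{(1-\beta)n/8}$.
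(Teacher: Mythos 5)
Your proposal follows essentially the same route as the paper: the Fuchs-style least-norm dual certificate $\mtx{Y}=\Phi^\adj\Phi_T(\Phi_T^\adj\Phi_T)^{-1}(\mtx{U}\mtx{V}^\adj)$, the independence of $\vct{q}$ (your $\vct{\lambda}$) from $\Phi_{T^\perp}$, the exact inverse-Wishart/Schur-complement law $\|\vct{q}\|_2^2/m\stackrel{d}{=}r/\chi^2_{m-d_T+1}$ with a chi-square lower-tail bound, the Davidson--Szarek bound on the $(n_1-r)\times(n_2-r)$ Gaussian block, and the identical algebra $(\sqrt{n_1-r}+\sqrt{n_2-r})^2\le 2(n_1+n_2-2r)$ yielding the threshold $r(3n_1+3n_2-5r)$. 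This matches the paper's argument, including its observation that the sharp inverse-Wishart estimate (rather than a crude $\sigma_{\min}(\Phi_T)$ bound) is what secures the constant; the remaining calibration of the deviation parameters to reach the stated probability $1-2e^{(1-\beta)n/8}$ is left as brief bookkeeping in both treatments.
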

Our results are 1) nonasymptotic and 2) demonstrate sharp constants
for sparse signal and low-rank matrix recovery, perhaps the two most
important cases in the general model reconstruction
framework. Further, our bounds are proven using elementary concepts
from convex analysis and probability theory.  In fact, the most
elaborate result from probability that we employ concerns the largest
singular value of a Gaussian random matrix, and this is only needed to
analyze the rank minimization problem.

We show in Section~\ref{sec:generic-setup} that the same construction
and analysis can be applied to prove Theorems \ref{teo:l1} and
\ref{teo:nuclear}. The method, however, handles a variety of
complexity regularizers including the $\ell_1/\ell_2$-norm as well.
When specialized in Section~\ref{sec:bounds}, we demonstrate sharp
constants for exact model reconstruction in all three of these cases
($\ell_1$, $\ell_1/\ell_2$ and nuclear norms).  We conclude the paper
with a brief discussion of how to extend these results to other
measurement ensembles.  Indeed, with very minor modifications, we can
achieve almost the same constants for subgaussian measurement
ensembles in some settings such as sign matrices as reflected by the second part of Theorem
\ref{teo:l1}.

\section{Dual Multipliers and Decomposable
  Regularizers}\label{sec:generic-setup}


\begin{definition}
The dual norm is defined as
\begin{equation}\label{eq:dual-norm-def}
  \|\vct{x}\|_\cA^* = \sup \left\{\langle \vct{x}, \vct{a} \rangle ~ : ~ \|\vct{a}\|_{\cA}\leq 1  \right\}. 
\end{equation}
\end{definition}
A consequence of the definition is the well-known and useful dual-norm
inequality
\begin{equation}
  \label{eq:holder}
  |\langle \vct{x},\vct{y}\rangle| \leq \|\vct{x}\|_\cA \|\vct{y}\|_\cA^*.
\end{equation}
The supremum in \eq{dual-norm-def} is always achieved and thus the
dual norm inequality \eq{holder} is tight in the sense that for any
$\vct{x}$, there is a corresponding $\vct{y}$ that achieves equality.
Additionally, it is clear from the definition that the subdifferential
of $\|\cdot\|_\cA$ at $\vct{x}$ is $\{\vct{v} : \langle
\vct{v},\vct{x}\rangle = \|\vct{x}\|_\cA,\, \|\vct{v}\|_{\cA}^*\leq
1\}$.

\subsection{Decomposable Norms}

We will restrict our attention to norms whose subdifferential has very
special structure effectively penalizing ``complex'' solutions.  In a
similar spirit to~\cite{Negahban09}, the following definition
summarizes the necessary properties of a good complexity regularizer:
\begin{definition}
\label{def:decomposable}
  A norm $\|\cdot\|_{\cA}$ is~\emph{decomposable} at $\vct{x}_0$ if
  there is a subspace $T\subset \R^n$ and a vector $\vct{e}\in T$ such
  that the subdifferential at $\mtx{x}_0$ has the form
\begin{equation*}
  \partial\|\vct{x}_0\|_\cA = \left\{\vct{z}\in\R^n~:~\PT(\vct{z}) = \vct{e}~~\mbox{and}~~ \|\PTc(\vct{z})\|_{\cA}^*\leq 1\right\}
\end{equation*}
and for any $\vct{w}\in T^\perp$, we have
\begin{equation*}
  \|\vct{w}\|_{\cA} = \sup_{\stackrel{\vct{v}\in T^\perp}{\|\vct{v}\|_{\cA}^*\leq 1}} \langle \vct{v},\vct{w} \rangle\,.
\end{equation*} 
Above, $\PT$ (resp.~$\PTc$) is the orthogonal projection onto $T$
(resp.~orthogonal complement of $T$). 
\end{definition}

When a norm is decomposable at $\vct{x}_0$, the norm essentially
penalizes elements in $T^\perp$ independently from $\vct{x}_0$.  The
most common decomposable regularizer is the $\ell_1$ norm on $\R^n$.
In this case, if $\vct{x}_0$ is an $s$-sparse vector, then $T$ denotes
the set of coordinates where $\vct{x}_0$ is nonzero and $T^\perp$ the
complement of $T$ in $\{1,\ldots,n\}$. We denote by $\vct{x}_{0,T}$
the restriction of $\vct{x}_0$ to $T$ and by
$\operatorname{\sgn}(\vct{x}_{0,T})$ the vector with $\pm 1$ entries
depending upon the signs of those of $\vct{x}_{0,T}$.  The dual norm
to the $\ell_1$ norm is the $\ell_\infty$ norm.  The subdifferential
of the $\ell_1$ norm at $\vct{x}_0$ is given by
\begin{equation*}
  \partial\|\vct{x}_0\|_1 = \left\{\vct{z}\in\R^n~:~\PT(\vct{z}) 
= \operatorname{\sgn}(\vct{x}_{0,T})~\mbox{and}~ 
\|\PTc(\vct{z})\|_\infty \le 1 \right\}\,.
\end{equation*}
That is, $\vct{z}$ is equal to the sign of $\vct{x}_0$ on $T$ and has
entries with magnitudes bounded above by $1$ on the orthogonal
complement.  As we will discuss in Section~\ref{sec:bounds}, the
$\ell_1/\ell_2$ norm and the matrix nuclear norm are also
decomposable.  The following Lemma gives conditions under which
$\vct{x}_0$ is the unique minimizer of~\eq{atomic-norm-primal}.

\begin{lemma}\label{lemma:unicity} Suppose that $\Phi$ is injective
  on the subspace $T$ and that there exists a vector $\vct{y}$ in the
  image of $\Phi^*$ (the adjoint of $\Phi$) obeying
\begin{enumerate}
\item $\PT(\vct{y}) = \vct{e}$, where $\vct{e}$ is as in Definition
  \ref{def:decomposable}, 
\item $\|\PTc(\vct{y})\|_{\cA}^*< 1$. 
\end{enumerate}
Then $\mtx{x}_0$ is the unique minimizer of~\eq{atomic-norm-primal}.
\end{lemma}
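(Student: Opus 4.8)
The plan is a standard dual-certificate argument that exploits decomposability to upgrade the usual subgradient inequality into a strict one. Let $\vct{x}_0 + \vct{h}$ be any feasible point for \eq{atomic-norm-primal}, so that $\Phi\vct{h} = 0$; the goal is to show $\|\vct{x}_0 + \vct{h}\|_\cA > \|\vct{x}_0\|_\cA$ whenever $\vct{h}\neq 0$. First I would record two elementary facts about the certificate $\vct{y}$. Since $\PT(\vct{y}) = \vct{e}$ and $\|\PTc(\vct{y})\|_\cA^* < 1 \le 1$, Definition \ref{def:decomposable} shows $\vct{y}\in\partial\|\vct{x}_0\|_\cA$. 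And since $\vct{y}$ lies in the range of $\Phi^*$, say $\vct{y} = \Phi^*\vct{q}$, we get $\langle \vct{y}, \vct{h}\rangle = \langle \vct{q}, \Phi\vct{h}\rangle = 0$.

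The key step is to build a subgradient of $\|\cdot\|_\cA$ at $\vct{x}_0$ that is adapted to $\vct{h}$. Apply the second clause of Definition \ref{def:decomposable} to $\vct{w} = \PTc(\vct{h})\in T^\perp$: the supremum defining $\|\PTc(\vct{h})\|_\cA$ is attained (the supremum in the definition of a dual norm always is), so there exists $\vct{v}_0\in T^\perp$ with $\|\vct{v}_0\|_\cA^*\le 1$ and $\langle \vct{v}_0, \PTc(\vct{h})\rangle = \|\PTc(\vct{h})\|_\cA$. Then $\vct{z} := \vct{e} + \vct{v}_0$ obeys $\PT(\vct{z}) = \vct{e}$ and $\PTc(\vct{z}) = \vct{v}_0$ with $\|\vct{v}_0\|_\cA^*\le 1$, hence $\vct{z}\in\partial\|\vct{x}_0\|_\cA$. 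The subgradient inequality, together with $\vct{v}_0\in T^\perp$ to discard the $\PT(\vct{h})$ contribution, yields
\[
\|\vct{x}_0 + \vct{h}\|_\cA \ge \|\vct{x}_0\|_\cA + \langle \vct{e}, \vct{h}\rangle + \langle \vct{v}_0, \PTc(\vct{h})\rangle = \|\vct{x}_0\|_\cA + \langle \vct{e}, \vct{h}\rangle + \|\PTc(\vct{h})\|_\cA .
\]

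It remains to bound $\langle \vct{e}, \vct{h}\rangle$ from below. Writing $\vct{y} = \vct{e} + \PTc(\vct{y})$ with $\vct{e}\in T$, and using $\langle \vct{y}, \vct{h}\rangle = 0$, we obtain $\langle \vct{e}, \vct{h}\rangle = \langle \vct{e}, \PT(\vct{h})\rangle = -\langle \PTc(\vct{y}), \PTc(\vct{h})\rangle \ge -\|\PTc(\vct{y})\|_\cA^*\,\|\PTc(\vct{h})\|_\cA$ by the dual-norm inequality \eq{holder}. Substituting this into the display above gives
\[
\|\vct{x}_0 + \vct{h}\|_\cA \ge \|\vct{x}_0\|_\cA + \bigl(1 - \|\PTc(\vct{y})\|_\cA^*\bigr)\,\|\PTc(\vct{h})\|_\cA \ge \|\vct{x}_0\|_\cA .
\]
Since $1 - \|\PTc(\vct{y})\|_\cA^* > 0$ by hypothesis, the first inequality is strict unless $\PTc(\vct{h}) = 0$, i.e.\ $\vct{h}\in T$; and then $\Phi\vct{h} = 0$ together with injectivity of $\Phi$ on $T$ forces $\vct{h} = 0$. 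Hence $\vct{x}_0$ is the unique minimizer. The only genuinely delicate point is the construction of the $\vct{h}$-adapted subgradient $\vct{z}$ using the attained supremum in the decomposability hypothesis; the rest is bookkeeping with orthogonal projections and the dual-norm inequality.
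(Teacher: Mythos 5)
Your proof is correct and follows essentially the same route as the paper's: the same construction of the adapted subgradient $\vct{e}+\vct{v}_0$ from decomposability, the same use of $\langle\vct{y},\vct{h}\rangle=0$ and the dual-norm inequality to obtain the bound $\|\vct{x}_0+\vct{h}\|_\cA \ge \|\vct{x}_0\|_\cA + (1-\|\PTc(\vct{y})\|_\cA^*)\|\PTc(\vct{h})\|_\cA$, and the same appeal to injectivity of $\Phi$ on $T$ to conclude. The only difference is cosmetic bookkeeping (you split $\langle\vct{e},\vct{h}\rangle$ off separately rather than subtracting $\vct{y}$ inside one inner product).
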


\begin{proof}
  The proof is an adaptation from a standard argument.  Consider any
  perturbation $\vct{x}_0 + \vct{h}$ where $\Phi\vct{h}= \vct{0}$.
  Since the norm is decomposable, there exists a $\vct{v}\in T^\perp$
  such that $\|\vct{v}\|_\cA^*\leq 1$ and $\langle \vct{v},
  \PTc(\vct{h}) \rangle = \|\PTc(\vct{h})\|_\cA$.  Moreover, we have
  that $\vct{e}+\vct{v}$ is a subgradient of $\|\cdot\|_{\cA}$ at
  $\vct{x}_0$.  Hence,
\begin{align*}
\|\vct{x}_0+\vct{h}\|_{\cA} &\geq \|\vct{x}_0\|_{\cA} + \langle \vct{e}+\vct{v},\vct{h}\rangle\\
&= \|\vct{x}_0\|_{\cA} + \langle \vct{e}+\vct{v} - \vct{y},\vct{h}\rangle\\
&= \|\vct{x}_0\|_{\cA} + \langle\vct{v} - \PTc(\vct{y}),\PTc(\vct{h})\rangle\\
&\geq  \|\vct{x}_0\|_{\cA} + (1-\|\PTc(\vct{y})\|_{\cA}^*)\|\PTc(\vct{h})\|_{\cA}\,.
\end{align*}
Since $\|\PTc(\vct{y})\|_{\cA}^*$ is strictly less than one, this last inequality holds strictly unless $\PTc(\vct{h})=0$.  But if $\PTc(\vct{h})=0$, then $\PT(\vct{h})$ must also be zero because we have assumed that $\Phi$ is injective on $T$.  This means that $\vct{h}$ is zero proving that $\vct{x}_0$ is the unique minimizer of~\eq{atomic-norm-primal}.
\end{proof}

\subsection{Constructing a Dual Multiplier}\label{sec:dual-multiplier}

To construct a $\mtx{y}$ satisfying the conditions of Lemma~\ref{lemma:unicity}, we follow the program developed in~\cite{Fuchs04} and followed by many researchers in the compressed sensing literature.  Namely, we choose the least squares solution of $\PT(\Phi^* \vct{q})=\vct{e}$, and then prove that $\vct{y} := \Phi^* \vct{q}$ has dual norm strictly less than $1$ on $T^\perp$.  

Let $\Phi_T$ and $\Phi_{T^\perp}$ denote the restriction of
$\Phi$ to $T$ and $T^\perp$ respectively. Let $d_T$ denote the
dimension of the space $T$.  Observe that if $\Phi_T$ is injective,
then
\begin{align}\label{eq:q-def}
  \vct{q} &=\Phi_T (\Phi_T^* \Phi_T)^{-1} \vct{e},\\
\label{eq:y-def}	\PTc(\vct{y}) &= \Phi_{T^\perp}^* \vct{q}.
\end{align}
The key fact we use to derive our bounds in this note is that, when
$\Phi$ is a Gaussian map, $\vct{q}$ and $\Phi_{T^\perp}^*$ are
independent, no matter what $T$ is.  This follows from the isotropy of
the Gaussian ensemble. This property is also true in the sparse-signal
recovery setting whenever the columns of $\Phi$ are
independent. Another way to express the same idea is that given the
value of $\vct{q}$, one can infer the distribution of $\PTc(\vct{y})$
with no knowledge of the values of the matrix $\Phi_T$.

We assume in the remainder of this section that $\Phi$ is a
Gaussian map.  Conditioned on $\vct{q}$, $\PTc(\vct{y})$ is
distributed as
\begin{equation*}
  \iota_{T^\perp} \vct{g}, 
\end{equation*}
where $\iota_{T^\perp}$ is an isometry from $\R^{n-d_T}$ onto
$T^\perp$ and $\vct{g}\sim
\mathcal{N}(0,\frac{\|\vct{q}\|_2^2}{m}\mtx{I})$ (here and in the
sequel, $\|\cdot\|_2$ is the $\ell_2$ norm). Also, $\Phi_T$ is
injective as long as $m\geq d_T$ and to bound the probability that the
optimization problem~\eq{atomic-norm-primal} recovers $\vct{x}_0$, we
therefore only need to bound
\begin{equation}\label{eq:conditional-bound}
	\P[ \|\PTc(\vct{y})\|_{\cA}^* \geq 1] \leq 
	\P[ \|\PTc(\vct{y})\|_{\cA}^* \geq 1~|~\|\vct{q}\|_2 \leq \tau]  +  
	\P[ \|\vct{q}\|_2 \geq \tau] 
\end{equation}
for some value of $\tau$ greater than $0$.  The first term in the
upper bound will be analyzed on a case-by-case basis in
Section~\ref{sec:bounds}.  As we have remarked, once we have
conditioned on $\vct{q}$, this term just requires us to analyze the
large deviations of Gaussian random variables in the dual norm.  What
is more surprising is that the second term can be tightly upper
bounded in a generic fashion for the Gaussian ensemble, independent of
the regularizer under study.

To see this, observe that $\vct{q}$ has squared norm
\begin{equation*}
  \|\vct{q}\|_2^2 = \langle \vct{e},  (\Phi_T^* \Phi_T)^{-1} \vct{e}\rangle\,.
\end{equation*} 
By assumption, $ (\Phi_T^* \Phi_T)^{-1}$ is a $d_T\times d_T$ inverse
Wishart matrix with $m$ degrees of freedom and covariance $m^{-1}
\mtx{I}_{d_T}$.  Since the Gaussian distribution is isotropic, we have
that $\|\vct{q}\|_2^2$ is distributed as $\|\vct{e}\|_2^2 m B_{11}$,
where $B_{11}$ is the first entry in the first column of an inverse
Wishart matrix with $m$ degrees of freedom and covariance
$\mtx{I}_{d_T}$.

To estimate the large deviations of $\|\vct{q}\|_2$, it thus suffices
to understand the large deviations of $B_{11}$.  A classical result in
statistics states that $B_{11}$ is distributed as an inverse
chi-squared random variable with $m-d_{T}+1$ degrees of freedom~(see,
\cite[page 72]{MardiaBook} for example)\footnote{The reader not
  familiar with this result can verify with linear algebra that
  $1/B_{11}$ is equal to the squared distance between the first column
  of $\Phi_T$ and the linear space spanned by all the others.  This
  squared distance is a chi-squared random variable with $m-d_T+1$
  degrees of freedom.}. We can thus lean on tail bounds for the
chi-squared distribution to control the magnitude of $B_{11}$.  For
each $t > 0$,
\begin{equation}\label{eq:q-bound}
\begin{aligned}
  \P\left[\|\vct{q}\|_2\geq \sqrt{\frac{m}{m-d_T+1-t}} \|\vct{e}\|_2\right]&= \P[z \leq m-d_T+1-t] \\
  &\leq \exp\left(-\frac{t^2}{4(m-d_T+1)}\right)\,.
  \end{aligned}
\end{equation}
Here $z$ is a chi-squared random variable with $m-d_T+1$ degrees of
freedom, and the final inequality follows from the standard tail bound
for chi-square random variables (see, for example,~\cite{LaurentMassart}).


To summarize, we have proven the following
\begin{proposition}\label{thm:generic-bound}
  Let $\|\cdot\|_{\cA}$ be a decomposable regularizer at $\vct{x}_0$
  and let $t>0$.  Let $\vct{q}$ and $\vct{y}$ be defined as
  in~\eq{q-def} and~\eq{y-def}. Then $\vct{x}_0$ is the unique optimal
  solution of~\eq{atomic-norm-primal} with probability at least
  \begin{equation}~\label{eq:generic-bound} 1 - \P\left[
      \|\PTc(\vct{y})\|_{\cA}^* \geq 1~\bigg|~\|\vct{q}\|_2 \leq
      \sqrt{\tfrac{m}{m-d_T+1-t}} \|\vct{e}\|_2\right] -
    \exp\left(-\tfrac{t^2/4}{m-d_T+1}\right)\,.
\end{equation}
\end{proposition}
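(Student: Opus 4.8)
The plan is to assemble the pieces that have already been laid out in this section; essentially all of the work is done, and what remains is bookkeeping. We start from Lemma~\ref{lemma:unicity}, which tells us that $\vct{x}_0$ is the unique minimizer of~\eq{atomic-norm-primal} as soon as (i) $\Phi$ is injective on $T$ and (ii) the dual multiplier $\vct{y} = \Phi^*\vct{q}$, built via the least-squares construction~\eq{q-def}--\eq{y-def}, satisfies $\PT(\vct{y}) = \vct{e}$ and $\|\PTc(\vct{y})\|_\cA^* < 1$. By construction $\PT(\vct{y}) = \PT(\Phi^*\vct{q}) = \vct{e}$ whenever $\Phi_T$ is injective, and $\vct{y}$ lies in the range of $\Phi^*$ by definition, so conditions 1 and 2 of the lemma hold precisely on the event that $\Phi_T$ is injective and $\|\PTc(\vct{y})\|_\cA^* < 1$. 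Thus the failure probability is at most $\P[\Phi_T \text{ not injective}] + \P[\|\PTc(\vct{y})\|_\cA^* \geq 1]$.

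Next I would bound each term. Since $\Phi$ is Gaussian and $m \geq d_T$, the restriction $\Phi_T$ is almost surely injective, so the first term vanishes (and the condition $m \geq d_T$ is subsumed by asking $m - d_T + 1 - t > 0$, which we need anyway for~\eq{q-bound} to make sense). For the second term, I would split according to the norm of $\vct{q}$ exactly as in~\eq{conditional-bound}, taking the threshold to be $\tau = \sqrt{m/(m-d_T+1-t)}\,\|\vct{e}\|_2$. This gives
\begin{equation*}
  \P[\|\PTc(\vct{y})\|_\cA^* \geq 1] \leq \P\!\left[\|\PTc(\vct{y})\|_\cA^* \geq 1 \;\middle|\; \|\vct{q}\|_2 \leq \tau\right] + \P[\|\vct{q}\|_2 \geq \tau]\,.
\end{equation*}
The conditioning here is legitimate because, by the isotropy argument recalled in the text, $\vct{q}$ and $\Phi_{T^\perp}^*$ are independent, so conditioning on $\|\vct{q}\|_2 \le \tau$ only rescales the Gaussian $\PTc(\vct{y}) = \iota_{T^\perp}\vct{g}$ and keeps the bound meaningful.

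Finally I would invoke the chi-squared tail computation~\eq{q-bound}, which with this choice of $\tau$ says exactly that $\P[\|\vct{q}\|_2 \geq \tau] \leq \exp(-t^2/(4(m-d_T+1)))$. Combining the three displays yields
\begin{equation*}
  \P[\vct{x}_0 \text{ not the unique minimizer}] \leq \P\!\left[\|\PTc(\vct{y})\|_\cA^* \geq 1 \;\middle|\; \|\vct{q}\|_2 \leq \sqrt{\tfrac{m}{m-d_T+1-t}}\|\vct{e}\|_2\right] + \exp\!\left(-\tfrac{t^2/4}{m-d_T+1}\right),
\end{equation*}
which is the complement of~\eq{generic-bound}, proving the proposition. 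There is no real obstacle here: the only subtlety worth stating carefully is the independence of $\vct{q}$ and $\Phi_{T^\perp}^*$, since it is what makes the conditional probability in~\eq{generic-bound} a clean Gaussian quantity rather than something entangled with the realization of $\Phi_T$; everything else is a direct concatenation of Lemma~\ref{lemma:unicity}, the union bound~\eq{conditional-bound}, and the tail estimate~\eq{q-bound}.
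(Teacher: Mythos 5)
Your proposal is correct and follows essentially the same route as the paper: the proposition is stated there as a summary of exactly this assembly of Lemma~\ref{lemma:unicity}, the conditioning split~\eq{conditional-bound} with $\tau = \sqrt{m/(m-d_T+1-t)}\,\|\vct{e}\|_2$, and the chi-squared tail bound~\eq{q-bound}, with the independence of $\vct{q}$ and $\Phi_{T^\perp}^*$ playing the same key role you identify.
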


\section{Bounds}\label{sec:bounds}

Using Proposition~\ref{thm:generic-bound}, we can now derive
non-asymptotic bounds for exact recovery of sparse vectors, block-sparse vectors, and low-rank matrices in a unified fashion.

\subsection{Compressed Sensing in the Gaussian Ensemble}\label{sec:sparsity}

Let $\vct{x}_0$ be an $s$-sparse vector in $\R^n$.  In this case, $T$
denotes the set of coordinates where $\vct{x}_0$ is nonzero and
$T^\perp$ the complement of $T$ in $\{1,\ldots,n\}$. As previously
discussed, the dual norm to the $\ell_1$ norm is the $\ell_\infty$
norm and the subdifferential of the $\ell_1$ norm at $\vct{x}_0$ is
given by
\begin{equation*}
  \partial\|\vct{x}_0\|_1 = \left\{\vct{z}\in\R^n~:~\PT(\vct{z}) = \operatorname{\sgn}(\vct{x}_{0,T})~\mbox{and}~ \|\PTc(\vct{z})\|_\infty \le 1 \right\}\,.
\end{equation*}
Here, $\dim(T)=s$, the sparsity of $\vct{x}_0$, and
$\vct{e}=\sgn(\vct{x}_0)$ so that $\|\vct{e}\|_2=\sqrt{s}$.

For $m\geq s$, set $\vct{q}$ and $\vct{y}$ as in \eq{q-def} and
\eq{y-def}. To apply Proposition~\ref{thm:generic-bound}, we only need
to estimate the probability that $\|\PTc(\vct{y})\|_{\infty}$ exceeds
$1$ conditioned on the event that $\|\vct{q}\|_2$ is bounded.
Conditioned on $\vct{q}$, the components of $\PTc(\vct{y})$ in
$T^\perp$ are i.i.d.~$\mathcal{N}(0,\|\vct{q}\|^2_2/m)$. Hence, for
any $\tau>0$, the union bound gives
\begin{align}
\label{eq:inf-norm-union-bound}
  \P\left[ \|\PTc(\vct{y})\|_\infty \geq 1 ~|~ \|\vct{q}\|_2\leq \tau \right] & 
\le(n-s)  \P[|z| \ge \sqrt{m}/\tau] \nonumber\\
& \le n \exp\Bigl(-\frac{m}{2\tau^2}\Bigr),  
\end{align}
where $z \sim \mathcal{N}(0,1)$.  We have made use above of the
elementary inequality $\P(|z| \ge t) \le e^{-t^2/2}$ which holds for
all $t \ge 0$.  For $\beta>1$, select
\[
\tau = \sqrt{\frac{ms}{m-s+1-t}} \qquad \text{with} \qquad t = 2\beta
\log(n) \left(\sqrt{1+\frac{2s(\beta-1)}{\beta}}-1 \right)\,.
\]
Here, $t$ is chosen to make the two exponential terms in our probability equal to each other.
 We can put all of the parameters together and
plug~\eq{inf-norm-union-bound} into~\eq{generic-bound}.  For $m =
2\beta s \log n + s$, $\beta>1$, a bit of algebra gives the first part
of Theorem \ref{teo:l1}.

\subsection{Block-Sparsity in the Gaussian Ensemble}\label{sec:block-sparse}

In simultaneous sparse estimation, signals are block-sparse in the sense that $\R^n$ can be decomposed into a decomposition of subspaces
\begin{equation}\label{eq:Rn-decomp}
	\R^n = \bigoplus_{b=1}^M V_b
\end{equation}
with each $V_b$ having dimension $B$~\cite{Parvaresh08, Eldar09}.
We assume that signals of interest are only nonzero on a few of the $V_b$'s and search for a solution which minimizes the norm
\begin{equation*}
	\|\vct{x}\|_{\ell_1/\ell_2} = \sum_{b=1}^M \|\vct{x}_b\|_2,
\end{equation*}
where $\vct{x}_b$ denotes the projection of $\vct{x}$ onto $V_b$.

Suppose $\vct{x}_0$ is block-sparse with $k$ active blocks.  $T$ here
denotes the coordinates associated with the groups where $\vct{x}_0$
has nonzero energy.  $T^\perp$ is equal to all of the coordinates of
the groups where $\vct{x}_0=0$. The dual norm to the $\ell_1/\ell_2$
norm is the $\ell_\infty/\ell_2$ norm
\begin{equation*}
	\|\mtx{x}\|_{\ell_\infty/\ell_2} = \max_{1\leq b\leq M } \|\vct{x}_b\|_2\,.
\end{equation*}
The subdifferential of the $\ell_1/\ell_2$ norm at $\vct{x}_0$ is
given by
\begin{equation*}
  \partial\|\vct{x}_0\|_{\ell_1/\ell_2} = \left\{\vct{z}\in\R^n~:~\PT(\vct{z}) = \sum_{b\cap T \neq \emptyset} \frac{\vct{x}_{0,b}}{\|\vct{x}_{0,b}\|_2}~\mbox{and}~ \|\PTc(\vct{z})\|_{\ell_\infty/\ell_2} \le 1\right\}\,.
\end{equation*}
Much like in the $\ell_1$ case, $T$ denotes the span of the set of
active subspaces and $T^\perp$ is the set of inactive subspaces.  In
this formulation, $\dim(T)=kB$ and
\begin{equation*}
  \vct{e} = \sum_{b\cap T \neq \emptyset} \frac{\vct{x}_{0,b}}{\|\vct{x}_{0,b}\|_2}\,.
\end{equation*}
Note also that $\|\vct{e}\|_2= \sqrt{k}$.

With the parameters we have just defined, we can define $\vct{q}$ and
$\vct{y}$ by~\eq{q-def} and~\eq{y-def}.  If we again condition on
$\|\vct{q}\|_2$, the components of $\vct{y}$ on $T^\perp$ are
i.i.d.~$\mathcal{N}(0,\|\vct{q}\|_2^2/m)$. Using the union bound, we
have
\begin{equation}\label{eq:block-sparse-ub-chi}
  \P\left[ \|\PTc(\vct{y})\|_{\ell_\infty/\ell_2} \geq 1~|~\|\vct{q}\|_2\leq \tau \right] \leq  \sum_{b\in T^\perp} 	\P\left[\| \vct{y}_b\|_{2} \geq 1~|~\|\vct{q}\|_2\leq \tau \right]. 
\end{equation}
Conditioned on $\vct{q}$,
$\tfrac{m}{\|\vct{q}\|_2^2}\|\vct{y}_b\|_2^2$ is identically
distributed as a chi-squared random variable with $B$ degrees of
freedom.  Letting $u = \sqrt{\chi_B}$, the Borell inequality
\cite[Proposition 5.34]{VershyninRMT} gives
\[
\P(u \ge \E u + t) \leq e^{-t^2/2}.
\]
Since $\E u\le \sqrt{B}$, we have $\P(u \geq \sqrt{B} + t) \leq
e^{-t^2/2}$. Using this inequality, with
\[
\tau = \sqrt{\frac{m k}{m-kB+1-t}}, 
\]
we have that the probability of failure is upper bounded by 
\begin{equation}\label{eq:block-sparse-pf}
 M \exp\left(-\tfrac{1}{2} \left[\sqrt{\frac{m-kB+1 -t}{k}} -
    \sqrt{B}\right]^2\right) + 
\exp\left(-\frac{t^2/4}{m-kB+1}\right).
\end{equation}
Choosing $m \geq (1+\beta) k(\sqrt{B} +\sqrt{2\log M})^2 + k B $ and
setting $t = (\beta/2) k(\sqrt{B} +\sqrt{2\log M})^2$, we can then
upper bound (\ref{eq:block-sparse-pf}) by
\begin{multline*}
  M \exp\left(-\frac12 \left[\sqrt{1+\beta/2} (\sqrt{B} +\sqrt{2\log
        M})-
      \sqrt{B}\right]^2\right)  \\
  + \exp\left(-\frac{\beta^2}{16(1+\beta)} k (\sqrt{B} +\sqrt{2\log
      M})^2\right) \le M^{-\beta/4} +
  M^{-\beta^2/(8+8\beta)}\,. 
\end{multline*}
This proves the following
\begin{theorem}
\label{teo:blockL1}
Let $\vct{x}_0$ be a block-sparse signal with $M$ blocks of size $B$
and $k$ active blocks under the decomposition~\eq{Rn-decomp}. Let
$\|\cdot\|_{\cA}$ be the $\ell_1/\ell_2$ norm. For Gaussian
measurement maps $\Phi$ with
\[
m >  (1+\beta) k(\sqrt{B} +\sqrt{2\log M})^2 + k B
\]
the recovery is exact with probability at least $M^{-\beta/4} +
M^{-\beta^2/(8+8\beta)}$\,.
\end{theorem}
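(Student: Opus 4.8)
The plan is to specialize the generic certificate bound of Proposition~\ref{thm:generic-bound} to the $\ell_1/\ell_2$ norm; after that, only two ingredients remain, namely a conditional large-deviation estimate for $\|\PTc(\vct y)\|_{\ell_\infty/\ell_2}$ and a good choice of the free parameters $m$ and $t$. First I would record the decomposable structure at a block-sparse $\vct x_0$ with $k$ active blocks under~\eq{Rn-decomp}: take $T$ to be the span of the active subspaces, so $d_T=kB$, and $\vct e=\sum_{b\cap T\neq\emptyset}\vct x_{0,b}/\|\vct x_{0,b}\|_2$, which, being a sum of $k$ mutually orthogonal unit vectors, has $\|\vct e\|_2=\sqrt k$. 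Since $m>kB=d_T$, the map $\Phi_T$ is injective, so $\vct q$ and $\vct y$ are well defined by~\eq{q-def} and \eq{y-def}, and Proposition~\ref{thm:generic-bound} applies with these values, reducing the task to bounding $\P[\,\|\PTc(\vct y)\|_{\ell_\infty/\ell_2}\ge1 \mid \|\vct q\|_2\le\tau\,]$ with $\tau=\sqrt{mk/(m-kB+1-t)}$, at the price of an additive term $\exp(-\tfrac{t^2/4}{m-kB+1})$.

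For the conditional probability I would invoke the independence of $\vct q$ and $\Phi_{T^\perp}^*$ noted in Section~\ref{sec:dual-multiplier}: conditioned on $\vct q$, the vector $\PTc(\vct y)$ is centered Gaussian on $T^\perp$ with covariance $(\|\vct q\|_2^2/m)\Id$. Because the inactive blocks $V_b$, $b\in T^\perp$, are mutually orthogonal $B$-dimensional subspaces that partition $T^\perp$, the restrictions $\vct y_b$ are independent with $\tfrac{m}{\|\vct q\|_2^2}\|\vct y_b\|_2^2\sim\chi^2_B$. A union bound over the at most $M$ inactive blocks, combined with Gaussian concentration of $u=\sqrt{\chi^2_B}$ around its mean $\E u\le\sqrt B$ (so $\P(u\ge\sqrt B+s)\le e^{-s^2/2}$), bounds the conditional failure probability by $M\exp\!\big(-\tfrac12\big[\sqrt{(m-kB+1-t)/k}-\sqrt B\,\big]^2\big)$. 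Together with the additive term this is exactly the two-term bound \eq{block-sparse-pf}.

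The last step is to tune $m$ and $t$, and I expect this balancing to be the main obstacle. Following the $\ell_1$ argument, I would set $t=(\beta/2)k(\sqrt B+\sqrt{2\log M})^2$ and $m\ge(1+\beta)k(\sqrt B+\sqrt{2\log M})^2+kB$. One checks that $m-kB+1-t=(1+\beta/2)k(\sqrt B+\sqrt{2\log M})^2+1>0$, so $\tau$ is well defined, and that $\sqrt{(m-kB+1-t)/k}-\sqrt B\ge\sqrt{1+\beta/2}\,\sqrt{2\log M}$, which forces the first term of \eq{block-sparse-pf} down to $M^{-\beta/4}$; a parallel lower bound on $\tfrac{t^2/4}{m-kB+1}$ pushes the second term to $M^{-\beta^2/(8+8\beta)}$, giving the stated probability. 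The delicacy is that $t$ must be large enough that $\tfrac{t^2/4}{m-kB+1}$ grows like a constant multiple of $\log M$, yet small enough that $m-kB+1-t$ stays a constant fraction of $m$ and the bracket $\sqrt{(m-kB+1-t)/k}-\sqrt B$ retains the full $\sqrt{2\log M}$ contribution; the additive split $m=(1+\beta)k(\sqrt B+\sqrt{2\log M})^2+kB$ is precisely what reconciles these competing demands with clean constants, with the leftover $+1$'s needing a little care in the degenerate regimes where $B$ or $k$ is small.
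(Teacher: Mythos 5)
Your proposal follows essentially the same route as the paper's own argument: the same decomposable setup with $d_T=kB$ and $\|\vct{e}\|_2=\sqrt{k}$, the same specialization of Proposition~\ref{thm:generic-bound}, the same block-wise union bound with concentration of $\sqrt{\chi^2_B}$ leading to \eq{block-sparse-pf}, and the identical choices of $\tau$, $t=(\beta/2)k(\sqrt{B}+\sqrt{2\log M})^2$, and $m$. The balancing step you flag as delicate is handled in the paper exactly as you describe, so your outline is correct and matches the published proof.
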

The bound on $m$ obtained by this theorem is identical to that
of~\cite{Rao12}, and is, to our knowledge, the tightest known
non-asymptotic bound for block-sparse signals.  For example, when
  the block size $B$ is much greater than $\log M$, the results
  asserts that roughly $2kB$ measurements are sufficient for the
  convex programming to be exact. Since there are $kB$ degrees of
  freedom, one can see that this is quite tight.

Note that the theorem gives a recovery result for sparse vectors by
setting $B=1$, $k=s$, and $M=n$.  In this case,
Theorem~\ref{teo:blockL1} gives a slightly looser bound and requires a
slightly more complicated argument as compared to
Theorem~\ref{teo:l1}.  However, Theorem~\ref{teo:blockL1} provides
bounds for more general types of signals, and we note that the same
analysis would handle other $\ell_1/\ell_p$ block regularization
schemes defined as $\|\vct{x}\|_{\ell_1/\ell_p} = \sum_{b = 1}^M
\|\vct{x}_b\|_p$ with $p \in [2,\infty]$. Indeed, the $\ell_1/\ell_p$
norm is decomposable and its dual is the $\ell_\infty/\ell_q$ norm
with $1/p + 1/q = 1$. The only adjustment would consist in bounding
$\|\vct{y}_{b}\|_q$; up to a scaling factor, this is a sum of
independent standard normals and our analysis goes through. We omit
the details.

\subsection{Low-Rank Matrix Recovery in the Gaussian
  Ensemble}\label{sec:lowrank}

To apply our results to recovering low-rank matrices, we need a little
bit more notation, but the argument is principally the same.  Let
$\mtx{X}_0$ be an $n_1 \times n_2$ matrix of rank $r$ with singular
value decomposition $\mtx{U}\mtx{\Sigma} \mtx{V}^*$.  Without loss of
generality, impose the conventions $n_1\leq n_2$, $\mtx{\Sigma}$ is
$r\times r$, $\mtx{U}$ is $n_1 \times r$, $\mtx{V}$ is $n_2 \times r$.

In the low-rank matrix reconstruction problem, the subspace $T$ is the
set of matrices of the form $\mtx{U}\mtx{Y}^* + \mtx{X}\mtx{V}^*$
where $\mtx{X}$ and $\mtx{Y}$ are arbitrary $n_1\times r$ and $n_2
\times r$ matrices.  The span of matrices of the form $\mtx{U}\mtx{Y}^*$ has dimension $n_1r$, 
 the span of $\mtx{X}\mtx{V}^*$ has dimension $n_2r$, and the intersection of these two spans has dimension $r^2$.  Hence, we have  $d_T=\operatorname{dim}(T) =
r(n_1+n_2-r)$.  $T^\perp$ is the subspace of matrices spanned by the
family $(\vct{x} \vct{y}^*)$, where $\vct{x}$ (respectively $\vct{y}$)
is any vector orthogonal to $\mtx{U}$ (respectively $\mtx{V}$). The
spectral norm denoted by $\| \cdot \|$ is dual to the nuclear norm.
The subdifferential of the nuclear norm at $\mtx{X}_0$ is given by
\begin{equation*}
	\partial \|\mtx{X}_0\|_* = \left\{\mtx{Z}~:~ \PT(\mtx{Z})=\mtx{U}\mtx{V}^*~\mbox{and}~\|\PTc(\mtx{Z})\|\leq 1\right\}\,.
\end{equation*}
Note that the Euclidean norm of $\mtx{U}\mtx{V}^*$ is equal to
$\sqrt{r}$.

For matrices, a Gaussian measurement map takes the form of a linear operator whose  $i$th component  is given by
\begin{equation*}
  [\Phi(\mtx{Z})]_i = \trace(\mtx{\Phi}_i^*\mtx{Z})\,.
\end{equation*}
Above, $\mtx{\Phi}_i$ is an $n_1\times n_2$ random matrix with i.i.d.,
zero-mean Gaussian entries with variance $1/m$.  This is equivalent to
defining $\Phi$ as an $m \times (n_1n_2)$ dimensional matrix acting on
$\operatorname{vec}(\mtx{Z})$, the vector composed of the columns of
$\mtx{Z}$ stacked on top of one another.  In this case, the dual
multiplier is a matrix taking the form
\begin{equation*}
	\mtx{Y} = \Phi^* \Phi_T (\Phi_T^* \Phi_T)^{-1}(\mtx{U}\mtx{V}^*)\,.
\end{equation*}
Here, $\Phi_T$ is the restriction of $\Phi$ to the subspace $T$.
Concretely, one could define a basis for $T$ and write out $\Phi_T$ as
an $m\times d_T$ dimensional matrix.  Note that none of the abstract
setup from Section~\ref{sec:dual-multiplier} changes for the matrix
recovery problem: $\mtx{Y}$ exists as soon as $m\geq
\operatorname{dim}(T)= r(n_1+n_2-r)$ and
$\PT(\mtx{Y})=\mtx{U}\mtx{V}^*$ as desired.  We need only guarantee
that $\|\PTc(\mtx{Y})\|< 1$.  We still have that
\begin{equation*}
	\PTc(\mtx{Y}) = \sum_{i=1}^m q_i \PTc(\mtx{A}_i) 
\end{equation*}
where $\vct{q} = \Phi_T (\Phi_T^* \Phi_T)^{-1}(\mtx{U}\mtx{V}^*)$ is
given by \eq{q-def} and, importantly, $\vct{q}$ and $\PTc(\mtx{A}_i)$
are independent for all $i$.

With such a definition, we can again straightforwardly
apply~\eq{generic-bound} once we obtain an estimate of
$\|\PTc(\mtx{Y})\|$ conditioned on $\vct{q}$. Observe that
$\PTc(\mtx{Y}) = {\cal P}_{U^\perp} \mtx{Y} {\cal P}_{V^\perp}$, where
${\cal P}_{U^\perp}$ (respectively ${\cal P}_{V^\perp}$ is a
projection matrix onto the orthogonal complement of $U$ (respectively
$V$).  It follows that $\PTc(\mtx{Y})$ is identically distributed to a
rotation of an $(n_1-r) \times (n_2-r)$ Gaussian random matrix whose
entries have mean zero and variance $\|q\|_2^2/m$.  Using the
Davidson-Szarek concentration inequality for the extreme singular
values for Gaussian random matrices~\cite{Davidson01}, we have
\begin{equation*}
	\P\left[ \|\PTc(\mtx{Y})\|>1~|~\|\vct{q}\|_2\leq \tau \right] \leq \exp\left(-\tfrac{1}{2} \left(\tfrac{\sqrt{m}}{\tau}-\sqrt{n_1-r}-\sqrt{n_2-r}  \right)^2\right)\,.
\end{equation*}

We are again in a position ready to prove~\eq{generic-bound}. To
guarantee matrix recovery, with $\tau = \sqrt{\frac{mr}{m - d_T + 1 -
    t}}$, we thus need
\begin{equation*}
\sqrt{\frac{m-d_T+1-t}{r}}-\sqrt{n_1-r}-\sqrt{n_2-r}\geq 0\,.
\end{equation*}
This occurs if
\begin{equation*}
	m \geq r(n_1+n_2-r) + \left(\sqrt{r(n_1-r)} + \sqrt{r(n_2-r)}\right)^2+t-1
\end{equation*}
But since $(a+b)^2\leq 2 (a^2+b^2)$, we can upper bound
\begin{equation*}
	\left(\sqrt{r(n_1-r)} + \sqrt{r(n_2-r)}\right)^2 \leq 2r(n_1+n_2-2r)\,.
\end{equation*}
Setting $t = (\sqrt{2r+1}-1) (\beta-1) (3n_1+3n_2 - 5r)$
in~\eq{generic-bound} then yields Theorem~\ref{teo:nuclear}.

\section{Discussion}


We note that with minor modifications, the results for sparse and
block-sparse signals can be extended to measurement matrices whose
entries are i.i.d.~subgaussian random variables.  In this case, we
can no longer use the theory of inverse Wishart matrices, but $\Phi_{T^\perp}$ and $\Phi_T$ are still independent, and we can bound the norm of $\mtx{q}$ using bounds on the smallest singular value of rectangular matrices.  For example, Theorem 39 in
\cite{VershyninRMT} asserts that there exist positive constants $\theta$
and $\gamma$ such that the smallest singular value obeys the deviation
inequality
\begin{equation}
\label{eq:vershyninRMT}
\P\left[\sigma_{\text{min}}(\Phi_T) \leq 1 - \theta \sqrt{d_T/m}
  -t \right] \leq e^{-\gamma m t^2}
\end{equation}
for $t > 0$.

We use this concentration inequality to prove the second part of
Theorem \ref{teo:l1}. Since $\|\Phi_T(\Phi_T^*
\Phi_T)^{-1}\| = \sigma_{\text{min}}^{-1}(\Phi_T)$, we have that 
\[
\|\vct{q}\|_2 \leq \frac{\sqrt{s}}{1-\theta\sqrt{s/m} - t} := \rho
\]
with probability at least $1-e^{-\gamma m t^2}$. This is the analog of
\eqref{eq:q-bound}. Now whenever $\|\vct{q}\|_2 \le \rho$, Hoeffding's
inequality \cite{1963} implies that \eqref{eq:inf-norm-union-bound}
still holds. Thus, we are in the same position as before, and obtain
\[
 \P\left[ \|\PTc(\vct{y})\|_\infty \geq 1 \right] \le 2(n-s) \exp\Bigl(-\frac{m}{2\rho^2}\Bigr) + \exp(-\gamma m t^2).
\]
Setting $t  =\epsilon/2$ proves the second part of Theorem \ref{teo:l1}.

For block-sparse signals, a similar argument would apply. The only
caveat is that we would need the following concentration bound which
follows from Lemma 5.2 in~\cite{Achlioptas03}: Let $\mtx{M}$ be an
$d_1\times d_2$ dimensional matrix with i.i.d. entries taking on
values $\pm 1$ with equal probability.  Let $\mtx{v}$ be a fixed
vector in $\R^{d_2}$. Then
\begin{equation*}
	 	\P\left[\| \mtx{M}\vct{v} \|_{2} \geq 1 \right]  \leq \exp\left(-\tfrac{\|v\|_2^{-2}-d_1}{4}\right)
\end{equation*}
provided $\|v\| \leq \sqrt{d_1}$.  Plugging this bound into~\eq{block-sparse-ub-chi} gives an analogous threshold for block-sparse signals in the Bernoulli model:

\begin{theorem}
\label{teo:block-sparse-bernoulli}
Let $\vct{x}_0$ be a block-sparse signal with $M$ blocks and $k$ active blocks  under the decomposition~\eq{Rn-decomp}. Let $\|\cdot\|_{\cA}$ be the $\ell_1/\ell_2$ norm.   Let $\beta>1$ and  $\epsilon\in (0,1)$.  For binary measurement maps $\Phi$ with i.i.d.~entries taking on values $\pm m^{-1/2}$ with equal probability, there exist numerical constants $c_0$ and $c_1$ such that if $M\geq \exp(c_0/\epsilon^2)$ and $m \geq  4k\beta(1-\epsilon)^{-2}\log M + 2kB$, the recovery is exact with probability at least  $1-M^{1-\beta} - M^{-c_1 \beta \epsilon^2}$.
\end{theorem}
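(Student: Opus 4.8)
The strategy is to combine the specialization used for Theorem~\ref{teo:blockL1} with the subgaussian modifications already carried out for the second part of Theorem~\ref{teo:l1}. As in that argument, we lose the inverse Wishart description of $\vct q$, but the columns of $\Phi$ are independent, so $\Phi_T$ and $\Phi_{T^\perp}$ remain independent; in particular, once we condition on $\vct q = \Phi_T(\Phi_T^*\Phi_T)^{-1}\vct e$ (cf.~\eq{q-def}), each block $\vct y_b = \Phi_b^*\vct q$ of $\PTc(\vct y)$ over an inactive block $b$ is a linear image of fresh Bernoulli randomness. Throughout, $d_T = kB$ and $\|\vct e\|_2 = \sqrt k$.

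First I would bound $\|\vct q\|_2$. Since $\|\Phi_T(\Phi_T^*\Phi_T)^{-1}\| = \sigma_{\min}(\Phi_T)^{-1}$, the smallest-singular-value deviation inequality \eq{vershyninRMT}, applied with $d_T = kB$, gives $\|\vct q\|_2 \le \sqrt k/(1-\theta\sqrt{kB/m}-t) =: \rho$ with probability at least $1 - e^{-\gamma m t^2}$, the analog of \eq{q-bound}; I would take $t = \epsilon/2$, as for Theorem~\ref{teo:l1}. Next, conditionally on $\{\|\vct q\|_2 \le \rho\}$, write $\Phi_b^* = m^{-1/2}\mtx M$ with $\mtx M$ a $B\times m$ sign matrix, so that $\vct y_b = \mtx M(\vct q/\sqrt m)$; the Achlioptas-type concentration bound quoted just before the theorem, applied with $d_1 = B$ and $\vct v = \vct q/\sqrt m$ (whose norm hypothesis holds on $\{\|\vct q\|_2\le\rho\}$ once $m$ is a large enough multiple of $kB$), gives $\P[\|\vct y_b\|_2 \ge 1 \mid \|\vct q\|_2 \le \rho] \le \exp(-(m/\rho^2 - B)/4)$. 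A union bound over the $M-k$ inactive blocks, exactly as in \eq{block-sparse-ub-chi}, then bounds $\P[\|\PTc(\vct y)\|_{\ell_\infty/\ell_2} \ge 1 \mid \|\vct q\|_2 \le \rho]$ by $M\exp(-(m/\rho^2-B)/4)$.

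Assembling the two events as in Proposition~\ref{thm:generic-bound}, the probability of failure is at most $M\exp(-(m/\rho^2-B)/4) + e^{-\gamma m\epsilon^2/4}$. For the second term, $m \ge 4k\beta(1-\epsilon)^{-2}\log M \ge 4\beta\log M$ forces it below $M^{-\gamma\beta\epsilon^2}$, so one can take $c_1 = \gamma$. For the first term I must check $m/\rho^2 \ge B + 4\beta\log M$: using $m/\rho^2 = (m/k)(1-\theta\sqrt{kB/m}-\epsilon/2)^2$ and $m/k \ge 4\beta(1-\epsilon)^{-2}\log M + 2B$, this reduces to showing that the slack $\theta\sqrt{kB/m}+\epsilon/2$ is small enough that $(1-\theta\sqrt{kB/m}-\epsilon/2)^2 \ge (1-\epsilon)^2$ (the extra $2B$ in $m/k$ then absorbing the subtracted $B$), for which $\theta\sqrt{kB/m}\le\epsilon/2$ suffices. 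This is exactly where the two hypotheses enter: $m\ge 2kB$ supplies the $B$-buffer and the norm hypothesis above, while $M\ge\exp(c_0/\epsilon^2)$ gives $\log M\ge c_0/\epsilon^2$, hence $m\ge 4k\beta(1-\epsilon)^{-2}\log M\ge 4kc_0/\epsilon^2$, so $\theta\sqrt{kB/m}\le\epsilon/2$ for $c_0$ a sufficiently large absolute constant. Carrying the inequalities through yields the stated bound $1 - M^{1-\beta} - M^{-c_1\beta\epsilon^2}$.

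The main obstacle is this last step. Unlike the Gaussian proofs, where $t$ is tuned so the two exponential terms balance, here $t$ is pinned at $\epsilon/2$ and one must verify by hand that the singular-value slack $\theta\sqrt{kB/m}$ — which has no counterpart in the Wishart computation — is dominated; the somewhat unusual hypothesis $M\ge\exp(c_0/\epsilon^2)$ is precisely what makes this possible, in direct parallel with the role of $n\ge\exp(c_0/\epsilon^2)$ in the second part of Theorem~\ref{teo:l1}. Everything else is a routine substitution into machinery already in place.
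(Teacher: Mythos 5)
Your route is exactly the one the paper intends (and only sketches): control $\|\vct{q}\|_2$ through the subgaussian smallest-singular-value bound \eq{vershyninRMT} as in the second part of Theorem~\ref{teo:l1}, apply the Achlioptas bound to each inactive block with $\vct{v}=\vct{q}/\sqrt{m}$, union bound as in \eq{block-sparse-ub-chi}, and then verify the parameter inequalities. The probabilistic part of your argument (independence of $\vct{q}$ and $\Phi_{T^\perp}$, the conditional per-block bound $\exp(-(m/\rho^2-B)/4)$, and the second term being at most $M^{-\gamma\beta\epsilon^2}$) is fine.

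The gap is in the closing verification that $m/\rho^2\geq B+4\beta\log M$. First, your claim that $M\geq\exp(c_0/\epsilon^2)$ forces $\theta\sqrt{kB/m}\leq\epsilon/2$ drops a factor of $\sqrt{B}$: from $m\geq 4k\beta(1-\epsilon)^{-2}\log M\geq 4kc_0/\epsilon^2$ you only get $\theta\sqrt{kB/m}\leq \theta\epsilon\sqrt{B}/(2\sqrt{c_0})$, so the claim requires $c_0\gtrsim\theta^2 B$, which is not a numerical constant; the other hypothesis $m\geq 2kB$ only caps the singular-value slack at $\theta/\sqrt{2}$, not at $\epsilon/2$. (In the $\ell_1$ case $B=1$ and the parallel claim is true, which is exactly why the port is deceptive: here $d_T=kB$ while $\|\vct{e}\|_2^2=k$, so the block size survives in the slack term.) Second, even granting $(1-\theta\sqrt{kB/m}-\epsilon/2)^2\geq(1-\epsilon)^2$, your chain gives $m/\rho^2\geq 4\beta\log M+2B(1-\epsilon)^2$, and the ``extra $2B$ absorbing the subtracted $B$'' needs $2(1-\epsilon)^2\geq 1$, which fails for $\epsilon>1-1/\sqrt{2}$. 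So as written the argument does not cover the full parameter range of Theorem~\ref{teo:block-sparse-bernoulli} (in particular the regime $B\gg\epsilon^2\log M$): there, with $m\approx 2kB$, one has $m/\rho^2\approx 2B(1-\theta/\sqrt{2}-t)^2$, and whether this exceeds $B+4\beta\log M$ for some admissible $t$ hinges on the unspecified constant $\theta$ in \eq{vershyninRMT} being small (roughly $\theta<\sqrt{2}-1$); this needs to be addressed explicitly, either by restricting the regime, adjusting the constants in the hypothesis on $m$, or choosing $t$ differently rather than pinning $t=\epsilon/2$.
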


For low-rank matrix recovery, the situation is more delicate.  With
general subgaussian measurement matrices, we no longer have
independence between the action on the subspaces $T$ and $T^\perp$
unless the singular vectors somehow align serendipitously with the
coordinate axes.  In this case, it unfortunately appears that we need
to resort to more complicated arguments and will likely be unable to
attain such small constants through the dual multiplier without a
conceptually new argument.

\begin{small}
\subsection*{Acknowledgements}
We thank anonymous referees for suggestions and corrections that have
improved the presentation. EC is partially supported by NSF via
grants CCF-0963835 and the 2006 Waterman Award; by AFOSR under grant
FA9550-09-1-0643; and by ONR under grant N00014-09-1-0258. BR is partially supported by ONR award N00014-11-1-0723 and NSF award CCF-1139953.  

\bibliographystyle{abbrv}
\bibliography{dual_multiplier}
\end{small}

\end{document}